\newcommand{\rmv}[1]{}
\newcommand{\MP}[1]{}
\newcommand{\RMP}[1]{}
\newcommand{\str}{\textrm{strong}} 
\newcommand{\sly}{\textrm{strongly}}
\newcounter{NewCounter}
\newcounter{claimcount}[NewCounter]
\newcounter{cclaimcount}[claimcount]
\newenvironment{claim}{
\par\stepcounter{claimcount}\textbf{Claim \arabic{NewCounter}.\arabic{claimcount}}\begin{itshape}
}
{\end{itshape}}
\newenvironment{cclaim}{
\par\stepcounter{cclaimcount}\textbf{Claim \arabic{NewCounter}.\arabic{claimcount}.\arabic{cclaimcount}}\begin{itshape}
}
{\end{itshape}}
\newtheorem{theorem}[NewCounter]{Theorem}
\newtheorem{lemma}[NewCounter]{Lemma}
\newtheorem{definition}[NewCounter]{Definition}
\title{On Linearizability and the\\ Termination of Randomized Algorithms}
\author{Vassos Hadzilacos \qquad Xing Hu \qquad Sam Toueg\\~\\Department of Computer Science\\University of Toronto\\Canada}
\begin{document}
\maketitle

\begin{abstract}
We study the question of whether the ``termination with probability 1'' property
	of a randomized algorithm is preserved when one replaces the atomic registers that the algorithm uses 
	with linearizable (implementations of) registers.
We show that in general this is not so:
	roughly speaking,
	every randomized algorithm $\mathcal{A}$ has a corresponding algorithm $\mathcal{A}'$
	that solves the same problem
	if the registers that it uses are atomic or strongly-linearizable,
	but does not terminate if these registers are replaced with ``merely'' linearizable ones.
Together with a previous result shown in~\cite{abdnotsl},
	this implies that one cannot use the well-known ABD implementation of registers in message-passing systems
	to automatically
	transform any randomized algorithm that works in shared-memory systems
	into a randomized algorithm that works in message-passing systems:
	with a strong adversary the resulting algorithm may not terminate.
\end{abstract}

\section{Introduction}
A well-known property of shared object implementations is \emph{linearizability}~\cite{linearizability}.
Intuitively, with a linearizable object (implementation) each operation must appear as if it takes effect instantaneously
	at some point during the time interval that it actually spans.
As pointed out by the pioneering work of Golab \emph{et al.} \cite{sl11}, however,
	linearizable objects are not as strong as atomic objects in the following sense:
	a randomized algorithm that works with atomic objects may lose some of its properties
	if we replace the atomic objects that it uses with objects that are only linearizable.
In particular, they present a shared-memory randomized algorithm
	that guarantees that
	some random variable has \emph{expected value} 1,
	but if we replace the algorithm's atomic registers with linearizable registers,
	a \mbox{\emph{strong adversary}} can manipulate the schedule to
	ensure that this random variable
	has expected value $\frac{1}{2}$.
To avoid this \mbox{weakness} of linearizability, and ``limit the additional power that a strong adversary may gain when
	atomic objects are replaced with implemented objects'',
	Golab \emph{et al.} introduced the concept of \emph{strong linearizability}~\cite{sl11}.

A natural question is whether this additional power of a strong adversary also applies to
	\emph{termination properties}, more precisely:
	is there a randomized algorithm that 
	(a) terminates with probability 1 against a strong adversary when the objects that it uses are atomic, but
	(b) when these objects are replaced with
	linearizable objects (of the same type),
 	a strong adversary can 
	ensure that
	the algorithm never terminates?
To the best of our knowledge, the question whether the ``termination with probability~1'' property
	can be lost 
	when atomic objects are replaced with linearizable ones
	is not answered by the results in~\cite{sl11},
	or in subsequent papers on this subject~\cite{sl19,sl15,sl12}.
	
This question is particularly interesting because
	one of the main uses of randomized algorithms
	in distributed computing is to achieve termination with probability 1~\cite{abrahamson1988,aspnes1993,aspnes1998,aspnes2003,aspnes1990,aspnes1992,attiya08,bracha1991,chandra1996}
	(e.g., to ``circumvent'' the famous FLP impossibility result~\cite{flp}).
For example, consider the well-known ABD algorithm that implements linearizable shared registers in message-passing systems~\cite{abd}.\footnote{This implementation works under the assumption that fewer than half of the processes may crash.}
One important use of this algorithm is to relate message-passing and shared-memory systems as follows:
	any algorithm that works with atomic shared registers can automatically
	be transformed into an algorithm for message-passing systems
	by replacing its atomic registers with the ABD register implementation.
But can we use the ABD algorithm to automatically transform any shared-memory \emph{randomized} algorithm
	that terminates with probability 1 (e.g., a randomized algorithm that solves consensus)
	into an algorithm that works in message-passing systems?

In this paper, we show that replacing atomic registers with linearizable registers can indeed affect
	the termination property of randomized algorithms: termination with probability 1 can be lost.
In fact we prove that this loss of termination
	is general in the following sense:
	every randomized algorithm~$\mathcal{A}$ has a corresponding algorithm $\mathcal{A}'$
	that solves the same problem
	if the registers that it uses are atomic or strongly-linearizable,
	but does not terminate if these registers are replaced with ``merely'' linearizable ones.
More precisely, we show that for every randomized algorithm $\mathcal{A}$ that solves a task $T$ (e.g., consensus)
	and terminates with probability 1 against a strong adversary,
	there is a corresponding randomized algorithm $\mathcal{A}'$
	that also solves $T$ such that:
	(1)~$\mathcal{A}'$~uses only a set of shared registers in addition to the set of base objects of $\mathcal{A}$;
	(2) if these registers are atomic or $\sly$ linearizable,
             then $\mathcal{A}'$ terminates with probability 1 against a strong adversary,
             and its expected running time is only a small constant more than the expected running time of $\mathcal{A}$;
             but
        (3) if the registers are \emph{only} linearizable, then a strong adversary can prevent the termination of $\mathcal{A}'$.

It is worth noting that this result allows us to answer our previous question about the ABD register implementation,
	namely, whether we can use it to automatically
	transform any randomized algorithm that works in shared-memory systems
	into a randomized algorithm that works in message-passing systems.
In another paper, we proved that, although the registers implemented by the ABD algorithm are linearizable,
	they are \emph{not} strongly linearizable~\cite{abdnotsl}.
Combining this result with the result of this paper proves that,
	in general, using the ABD register implementation instead of atomic registers
	in a randomized algorithm may result in an algorithm that does not terminate.

\section{Model sketch}
We consider a standard asynchronous shared-memory system with \emph{atomic} registers~\cite{lam86,herlihy91}
	where processes are subject to crash failures.
We consider register implementations that are \emph{linearizable}~\cite{linearizability} or \emph{strongly linearizable}~\cite{sl11}.
For brevity, in this paper a  ``linearizable [strongly-linearizable] register'' refers to an ``implemented register whose implementation is linearizable [strongly-linearizable]".

\noindent
The precise definition of strong linearizability of~\cite{sl11} is reproduced here for convenience: 	

\begin{definition}\label{SL}
 A set of histories $\mathcal{H}$ over a set of shared objects is strongly linearizable 
 	if there exists a function $f$ mapping histories 
 	in close($\mathcal{H}$) to sequential histories, such that:

~~~ (L) for any $H \in close(\mathcal{H})$, f(H) is a linearization of H,
	 and 

~~~ (P) for any $G, H \in close(\mathcal{H})$, if G is a prefix of H,
	then f(G) is a prefix of f(H).

\noindent
The function f is called a strong linearization function for $\mathcal{H}$.
\end{definition}

\section{Result}
\begin{algorithm}[!ht]
\caption{Weakener algorithm}
\begin{multicols}{2}
\label{toyalgo}
For $j = 0,1,2,...$
\begin{itemize}
\item $R_1[j]$: MWMR register
	initialized to $\bot$
	
\item $C_1[j]$: SWMR register	
	initialized to $-1$

\item $R_2[j]$: SWMR register initialized to $\textsc{false}$
\end{itemize}
\vspace{0.5cm}
\begin{algorithmic}[1]

\STATE \textsc{Code of process $p_i$, $i\in \{0, 1\}$}:
\FOR  {rounds $j = 0,1,2,...$}
\STATE \{* \textbf{Phase 1:} writing $R_1[j]$ *\}
\STATE $R_1[j] \gets i$ \label{pwrite1}
\IF{$i=0$}\label{cointosser1}
\STATE \{* code executed only by $p_0$ *\}
\STATE $C_1[j] \gets$ flip coin \label{pcoin1}
\ENDIF
\STATE \{* \textbf{Phase 2:} reading $R_2[j]$ *\}
\STATE $v_1 \gets R_2[j]$ \label{v1}
\IF{$v_1 = \textsc{false}$}\label{guard2}
     \STATE \textbf{exit for loop} \label{exit2}
\ENDIF
\ENDFOR
\STATE \textbf{return}\label{halt0}

\columnbreak
~

~

~

~

~

~

~

\STATE \textsc{Code of process $p_i$, \mbox{$i\in \{2, 3, \ldots, n-1\}$}:}
\FOR  {rounds $j = 0,1,2,...$}
\STATE \{* \textbf{Phase 1:} reading $R_1[j]$ and $C_1[j]$ *\}
\STATE $u_1 \gets R_1[j]$ \label{u1}
\STATE $u_2 \gets R_1[j]$ \label{u2}
\STATE $c_1 \gets C_1[j]$ \label{rcoin1} 
\IF{($u_1 \neq c_1$ \OR $u_2 \neq 1-c_1$)}\label{guard1}
     \STATE \textbf{exit for loop} \label{exit1}
\ENDIF
\STATE \{* \textbf{Phase 2:} writing $R_2[j]$ *\}
\STATE $R_2[j] \gets \textsc{true}$ \label{pwrite2}

\ENDFOR
\STATE \textbf{return}\label{halt1}
\end{algorithmic}
\end{multicols}
\end{algorithm}

Consider Algorithm~\ref{toyalgo} for $n \ge 3$ processes $p_0, p_1, p_2, \ldots, p_{n-1}$.
This algorithm uses linearizable registers $R_1[j]$, $R_2[j]$, and $C_1[j]$ for $j \ge 0$.
We first show that if these registers 
	are \emph{not} $\sly$ linearizable,
 	then a strong adversary $\mathcal{S}$ can construct an execution
	of Algorithm~\ref{toyalgo} in which all the processes are \emph{correct}\footnote{A process is correct
	if it takes infinitely many steps. We assume that after returning from the algorithm
	in line~\ref{halt0} or~\ref{halt1}, processes are supposed to take NOP steps (forever).}
	but they loop forever without reaching a return statement in line~\ref{halt0} or~\ref{halt1} (Theorem~\ref{LinearizableIsWeak}).
We then show that if these registers 
	are $\sly$ linearizable, then all the correct processes
	return from the algorithm with probability $1$
	(within 2 rounds in expectation) (Theorem~\ref{WSLinearizableIsStrong}).\footnote{It turns out that
	these results depend only on whether the registers $R_1[j]$ ($j \ge 0$)
	are strongly-linearizable or not; this can be easily seen from
	the proofs of Theorems~\ref{LinearizableIsWeak} and~\ref{WSLinearizableIsStrong}.}

\begin{theorem}\label{LinearizableIsWeak}
If the registers of Algorithm~\ref{toyalgo} are linearizable but not $\sly$ linearizable,
	a~strong adversary $\mathcal{S}$ can construct a run where all the processes execute
	infinitely many rounds (and therefore never return in line~\ref{halt0} or~\ref{halt1}).
\end{theorem}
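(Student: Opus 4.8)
The plan is to construct, round by round, an execution under a strong adversary $\mathcal{S}$ in which no process ever fails a loop guard, so that all $n$ processes run forever. A process returns only through line~\ref{exit1} or line~\ref{exit2}: a watcher $p_k$ ($k\ge 2$) returns iff it reads a pair $(u_1,u_2)\neq(c_1,1-c_1)$ from $R_1[j]$ and $C_1[j]$, and $p_0,p_1$ return iff some watcher fails to set $R_2[j]$ to \textsc{true} before they read it. Hence it suffices to guarantee, in every round $j$, that (i)~every watcher's two reads of $R_1[j]$ return $(c_j,1-c_j)$, where $c_j\in\{0,1\}$ is the value $p_0$ flips in line~\ref{pcoin1} of round~$j$, so that every watcher passes the guard in line~\ref{guard1} and writes $R_2[j]\gets\textsc{true}$; and (ii)~$p_0$ and $p_1$ read $R_2[j]=\textsc{true}$ in line~\ref{v1}, so neither exits. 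Since $p_0$ must complete its write of $0$ to $R_1[j]$ (line~\ref{pwrite1}) before it can flip $c_j$, and each watcher reads $C_1[j]$ only after its two reads of $R_1[j]$, the difficulty is that $\mathcal{S}$ must fix the returned values $(u_1,u_2)$ to match a coin revealed only after $p_0$'s write has completed. Note that when $c_j=1$ a watcher must read the ``new'' value $1$ first and the ``old'' value $0$ second; this is realizable only if $p_0$'s completed write of $0$ is nonetheless linearized after the watcher's first read, which forces that read to be concurrent with $p_0$'s write (invoked before it returns) and to respond only after $c_j$ has been revealed.

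This is exactly where non-strong-linearizability is needed, and I expect it to be the main obstacle. With atomic or strongly-linearizable $R_1[j]$ the prefix property~(P) of Definition~\ref{SL} forces the implementation to commit a read's return value as soon as the read can respond, so $\mathcal{S}$ cannot postpone that commitment until after $p_0$ reveals $c_j$; this is precisely what makes the registers ``strong enough'' in Theorem~\ref{WSLinearizableIsStrong}. When $R_1[j]$ is linearizable but \emph{not} strongly-linearizable, I would turn the failure of~(P) into an operational \emph{deferral witness}: a reachable configuration of the implementation of $R_1[j]$ in which $p_0$'s write of $0$ has returned but remains concurrent with a pending watcher read (because the read was invoked before the write returned), $p_1$'s write of $1$ is pending, and the read is \emph{bivalent} --- $\mathcal{S}$ can drive it to return either $0$ or $1$ by its subsequent interleaving of the pending low-level steps, and this choice can be made after, and independently of, the steps by which $p_0$ flips and writes $C_1[j]$ (a disjoint object). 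The heart of the argument is extracting such a witness from the mere negation of~(P): I would consider the tree of finite histories of $R_1[j]$ together with the linearization prefixes that any candidate $f$ would be forced to commit, and argue by a K\"onig's-lemma/compactness argument that the non-existence of a prefix-consistent $f$ yields a finite history from which two extensions force opposite orderings of the two pending writes, witnessed by a read whose value differs between the extensions.

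Given the witness, the construction of round~$j$ is routine bookkeeping. Starting from the configuration at the end of round~$j-1$, $\mathcal{S}$ schedules $p_0$'s write of $0$ and $p_1$'s write of $1$ to $R_1[j]$ and the watchers' first reads so as to reach the critical configuration with $p_0$'s write completed and the reads pending; it then lets $p_0$ flip $c_j$ and write $C_1[j]$. Now knowing $c_j$, $\mathcal{S}$ uses the deferral witness to complete each watcher's first read returning $c_j$ and, after placing the remaining write, its second read returning $1-c_j$ (concretely, all first reads are linearized before the later of the two writes and all second reads after it, the order of the two writes being chosen according to $c_j$); it then completes $p_1$'s pending write. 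Each watcher next reads $C_1[j]=c_j$, passes its guard, and executes line~\ref{pwrite2}, setting $R_2[j]\gets\textsc{true}$; finally $\mathcal{S}$ lets $p_0$ and $p_1$ read $R_2[j]=\textsc{true}$ in line~\ref{v1} and proceed to round~$j+1$. Composing the single-read witness into one that simultaneously handles all $n-2$ watchers is the one place where some care is required beyond the single-reader case.

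Finally, I would close by induction on~$j$: every process executes at least one step in every round, so all processes are correct, and by (i) and (ii) no guard in line~\ref{guard1} or line~\ref{guard2} ever causes an exit, so no process reaches line~\ref{halt0} or~\ref{halt1}. Hence $\mathcal{S}$ has built a run in which all processes execute infinitely many rounds and never return, as required.
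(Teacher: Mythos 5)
Your target execution is the same one the paper builds: in each round, keep $p_1$'s write of $1$ and the watchers' first reads of $R_1[j]$ pending while $p_0$ completes its write of $0$ and reveals the coin, then linearize the two writes in the order dictated by the coin, put all first reads between them and all second reads after both, so every watcher gets $(u_1,u_2)=(c_1,1-c_1)$, passes line~\ref{guard1}, writes $R_2[j]\gets\textsc{true}$, and $p_0,p_1$ then read \textsc{true} in line~\ref{v1}. That skeleton, the Phase-2 bookkeeping, and the closing induction all match the paper.

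Where you diverge is in what licenses the retroactive choice of write order, and that is where your proposal has a genuine gap. You declare the ``heart of the argument'' to be extracting, from the mere failure of property~(P) of Definition~\ref{SL}, an operational ``deferral witness'' --- a reachable bivalent configuration of the $R_1[j]$ implementation --- via a K\"onig's-lemma/compactness argument over the tree of histories. This step is only sketched, and it does not follow: the negation of strong linearizability guarantees that prefix-preservation fails for \emph{some} history of the object, but not that it fails in the very specific pattern your construction needs (two concurrent writes by $p_0$ and $p_1$ whose order remains undetermined after $p_0$'s write has \emph{returned}, deferrable past steps on the disjoint object $C_1[j]$, exploitable simultaneously by all $n-2$ pending readers, and reproducible in every round $j$ on a fresh register). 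An implementation can fail (P) for entirely unrelated reasons while still committing the order of these two writes before $p_0$'s write responds, in which case your witness does not exist and the construction stalls. You also flag, but do not resolve, the composition of the single-reader witness across all watchers. The paper sidesteps all of this by working at the level of the abstract specification: a ``linearizable register'' admits \emph{any} history that has a linearization, so the adversary may simply emit the history described above and choose, after seeing the coin, whichever of the two orderings of the concurrent writes it likes --- both yield valid linearizations, and no structural consequence of $\neg$(P) is needed. (The ``not strongly linearizable'' hypothesis serves to exclude the additional constraint that, by Theorem~\ref{WSLinearizableIsStrong}, would defeat the construction.) To repair your proof, either adopt that specification-level reading, or supply the missing extraction argument --- which, as stated, I do not believe can be made to work from $\neg$(P) alone.
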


\begin{proof}
Assume that $R_1[j]$, $R_2[j]$, and $C_1[j]$ (for all $j \ge 0$) are linearizable but \emph{not} $\sly$ linearizable.
A strong adversary $\mathcal{S}$ can construct an infinite execution of Algorithm~\ref{toyalgo} as follows (Figure~\ref{toy}):

\begin{figure}[!htb]
   		 \centering 
    		\includegraphics[width=1\textwidth]{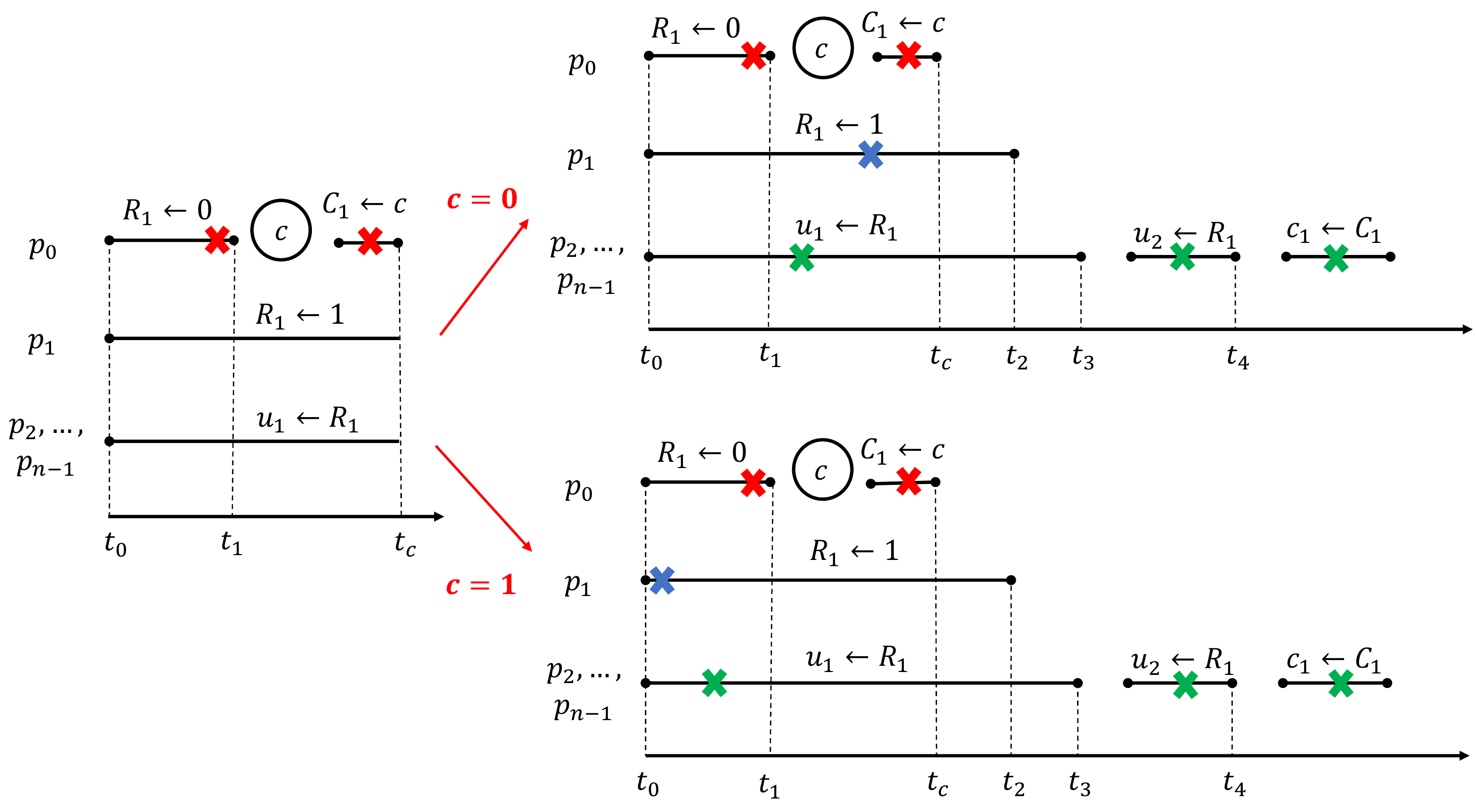}
   		 \caption{Phase 1 in a single round of an infinite execution} 
   		 \label{toy}
	\end{figure}

\newcommand{\procs}{p_2, p_3, \ldots, p_{n-1}}
\newcommand{\Procs}{\textrm{processes } p_2, p_3, \ldots, p_{n-1}}

\begin{enumerate}

\item\label{first-step} \textbf{Phase 1:} At time $t_0$,
	process $p_0$ starts writing~$0$ into $R_1[0]$ in line~\ref{pwrite1},
	process $p_1$ starts writing~$1$ into $R_1[0]$ in line~\ref{pwrite1},
	and $\Procs$ start reading $R_1[0]$ in line~\ref{u1}.

\item At time $t_1 > t_0$, process $p_0$ completes its writing of 0 into $R_1[0]$ in line~\ref{pwrite1}.

\item After time $t_1$,
	process $p_0$ flips a coin and writes the result into $C_1[0]$ in line~\ref{pcoin1}.
	Let $t_c > t_1$ be the time when $p_0$ completes this write.

Depending on the result of $p_0$'s coin flip (and therefore the content of $C_1[0]$),
	the adversary $\mathcal{S}$ continues the run it is constructing in one of the following two ways:

\textbf{Case 1}: $C_1[0]=0$ at time $t_c$.

The continuation of the run in this case is shown at the top of Figure~\ref{toy}.

\begin{enumerate}

\item At time $t_2 > t_c$,
		$p_1$ completes its writing of 1 into $R_1[0]$ (line~\ref{pwrite1}).
		
		Note that \emph{both} $p_0$ and $p_1$ have now completed Phase 1 of round $j=0$.

\item The adversary $\mathcal{S}$ linearizes the write of 1 into $R_1[0]$ by $p_1$
	\emph{after} the write of 0 into $R_1[0]$ by~$p_0$.

\item Note that $\procs$ are still reading $R_1[0]$ in line~\ref{u1}.
Now the adversary linearizes these read operations \emph{between} the above write of 0 by $p_0$ and the write of 1 by~$p_1$.

\item At time $t_3 > t_2$,
		$\Procs$ complete their read of $R_1[0]$ in line~\ref{u1}.
By the above linearization, they read $0$, and so they set (their local variable) $u_1 = 0$ in that line.
		
\item Then $\Procs$ start and complete their read of $R_1[0]$ in line~\ref{u2}.
Since (1)~these reads start \emph{after} the time $t_2$ when $p_1$ completed its write of 1 into $R_1[0]$,
	and (2)~this write is linearized \emph{after} the write of $p_0$ into $R_1[0]$,
	$\Procs$ read~$1$.
So they all set (their local variable) $u_2 = 1$ in line~\ref{u2}.
Let $t_4 > t_3$ be the time when every process $\procs$ has set $u_2 = 1$.

\item After time $t_4$, $\Procs$ start reading $C_1[0]$ in line~\ref{rcoin1}.
		Since $C_1[0] =0$ at time~$t_c$ and it is not modified thereafter,
		$\procs$ read 0 and set (their local variable) $c_1 = 0$ in line~\ref{rcoin1}.
	
\item Then $\procs$ execute line~\ref{guard1} and find that the condition of this line is \emph{not} satisfied
	because they have $u_1 = c_1=0$ and $u_2 = 1- c_1 = 1$.

	   So $\procs$ complete Phase 1 of round $j = 0$
	   \emph{without exiting in line~\ref{exit1}}.
	Recall that both $p_0$ and $p_1$ also completed Phase 1 of round $j=0$ without exiting.

\end{enumerate}

\textbf{Case 2}: $C_1[0]=1$ at time $t_c$.

The continuation of the run in this case is shown at the top of Figure~\ref{toy}.
This continuation is essentially symmetric to the one for Case 1:
	the key difference is that the adversary  $\mathcal{S}$ now linearizes
	the write of $p_1$ before the write of $p_0$, as we describe in detail below.
	
\begin{enumerate}

\item At time $t_2 > t_c$,
		$p_1$ completes its writing of 1 into $R_1[0]$ (line~\ref{pwrite1}).
		
		Note that \emph{both} $p_0$ and $p_1$ have now completed Phase 1 of round $j=0$.

\item $\mathcal{S}$ linearizes the write of 1 into $R_1[0]$ by $p_1$
	\emph{before} the write of 0 into $R_1[0]$ by~$p_0$.

\item Note that $\procs$ are still reading $R_1[0]$ in line~\ref{u1}.
Now the adversary linearizes these read operations \emph{between} the above write of 1 by $p_1$ and the write of 0 by~$p_0$.

\item At time $t_3 > t_2$,
		$\Procs$ complete their read of $R_1[0]$ in line~\ref{u1}.
By the above linearization, they read $1$, and so they set (their local variable) $u_1 = 1$ in that line.
		
\item Then $\Procs$ start and complete their read of $R_1[0]$ in line~\ref{u2}.
Since (1)~these reads start \emph{after} the time $t_1$ when $p_0$ completed its write of 0 into $R_1[0]$,
	and (2)~this write is linearized \emph{after} the write of $p_1$ into $R_1[0]$,
	$\Procs$ read~$0$.
So they all set (their local variable) $u_2 = 0$ in line~\ref{u2}.
Let $t_4 > t_3$ be the time when every process $\procs$ has set $u_2 = 0$.

\item After time $t_4$, $\Procs$ start reading $C_1[0]$ in line~\ref{rcoin1}.
		Since $C_1[0] = 1$ at time~$t_c$ and it is not modified thereafter,
		$\procs$ read 1 and set (their local variable) $c_1 = 1$ in line~\ref{rcoin1}.
	
\item Then $\procs$ execute line~\ref{guard1} and find that the condition of this line is \emph{not} satisfied
	because they have $u_1 = c_1=1$ and $u_2 = 1- c_1 = 0$.

	   So $\procs$ complete Phase 1 of round $j = 0$
	   \emph{without exiting in line~\ref{exit1}}.
	Recall that both $p_0$ and $p_1$ also completed Phase 1 of round $j=0$ without exiting.

\end{enumerate}

Thus in both cases, all $n$ processes complete Phase 1 of round $j = 0$ without exiting,
	and are now poised to execute Phase 2 of this round.
The adversary $\mathcal{S}$ extends the run that it built so far as follows.

\renewcommand{\procs}{p_0 \textrm{ and }  p_1}
\renewcommand{\Procs}{\textrm{processes } p_0 \textrm{ and } p_1}

\item\label{first-step} \textbf{Phase 2:} 
	Process $p_2$ writes~\textsc{true} into $R_2[0]$ in line~\ref{pwrite2};
	let $t'_0$ be the time when this write operation completes.
	(Note that $p_2$ has now completed Phase 2 of round 0.)
	
\item After time $t'_0$, $\Procs$ read $R_2[0]$ into $v_1$ in line~\ref{v1};
	Since $p_2$ completes its write of \textsc{true} into $R_2[0]$
	before $\procs$ start to read this register,
	$\procs$ set $v_1 =  \textsc{true}$ in line~\ref{v1}.

\item Then $\procs$ execute line~\ref{guard2}
	and find that the condition ``$v_1 = \textsc{false}$ ''of this line is \emph{not} satisfied.
         So $\procs$ complete Phase 2 of round $j = 0$
	   \emph{without exiting in line~\ref{exit2}}.

\item Processes ${p_3, \ldots, p_n}$ execute line~\ref{pwrite2},
	and so they also complete Phase 2 of round $j = 0$.
	
So all the $n$ processes $p_0, p_1, \ldots, p_{n-1}$,
	have completed Phase 2 of round $0$ without exiting;
	they are now poised to execute round $j=1$.
\end{enumerate}

\noindent
The adversary $\mathcal{S}$ continues to build the run
	by repeating the above scheduling of $p_0, p_1, \ldots, p_{n-1}$ for rounds $j=1,2,\ldots$.
This gives a non-terminating run of Algorithm~\ref{toyalgo} with \mbox{probability~1:}
	in this run,
	all processes are correct, i.e., each takes an infinite number of steps, but loops forever in a for loop
	and never reaches the return statement that follows this loop (in line~\ref{halt0} or~\ref{halt1}).
\end{proof}

We now prove that if the registers $R_1[j]$ for $j=1,2,...$ are $\sly$ linearizable,
	then Algorithm~\ref{toyalgo} terminates with probability 1, even against a strong adversary.
Roughly speaking,
	this is because if $R_1[j]$ is $\sly$ linearizable,
	then the order in which $0$ and $1$ are written into $R_1[j]$ in line~\ref{pwrite1} is already \emph{fixed}
	before the adversary $\mathcal{S}$ can see result of the coin flip in line~\ref{pcoin1} of round $j$.
So for every round $j\ge 0$,
	the adversary can\emph{not} ``retroactively'' decide on this linearization order
	according to the coin flip result
	(as it does in the proof of Theorem~\ref{LinearizableIsWeak}, where $R_1[j]$ is merely linearizable)
	to ensure that
	processes $p_i$ ($i \ge 2$) do not exit by the condition of line~\ref{guard1}.
Thus, with probability 1/2, all these processes will exit in line~\ref{exit1}.
And if they all exit there, then no process will write \textsc{true} in register $R_2[j]$ in line~\ref{pwrite2},
	and so $p_0$ and $p_1$ will also exit
	in line~\ref{exit2} of round $j$.

\begin{theorem}\label{WSLinearizableIsStrong}
If the registers of Algorithm~\ref{toyalgo} are $\sly$ linearizable,
	 then the algorithm terminates, even against a strong adversary:
	 with probability 1, all the correct processes
	 reach the return statement in line~\ref{halt0} or~\ref{halt1};
	 furthermore, they do so within 2 expected rounds.
\end{theorem}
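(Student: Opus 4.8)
The plan is to reduce everything to a single per-round estimate: I will show that for every round $j$ that is reached, conditioned on the entire history up to the coin flip of line~\ref{pcoin1} in round $j$, \emph{all} processes terminate during round $j$ with probability at least $\tfrac12$. Granting this, the number of rounds executed is stochastically dominated by a geometric random variable with success probability $\tfrac12$ (the coin flipped in line~\ref{pcoin1} of round $j$ is fresh and independent of the conditioning), which immediately yields termination with probability~$1$ and an expected number of rounds at most~$2$; since a process returns (line~\ref{halt0} or~\ref{halt1}) as soon as it leaves the \textbf{for} loop, this is the desired conclusion. To see that the per-round estimate gives termination of \emph{everyone}, note that if none of $p_2,\dots,p_{n-1}$ survives Phase~1 of round $j$ (i.e., each exits in line~\ref{exit1}), then no process writes \textsc{true} into $R_2[j]$ in line~\ref{pwrite2}, so $R_2[j]$ retains its initial value \textsc{false}; hence $p_0$ and $p_1$ read $v_1=\textsc{false}$ in line~\ref{v1} and exit in line~\ref{exit2}.

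It therefore suffices to bound, by $\tfrac12$, the probability that \emph{some} $p_i$ ($i\ge 2$) survives round $j$. Let $b_j\in\{0,1\}$ be the coin flipped in line~\ref{pcoin1}, and let $\sigma_j$ denote the order in which the two writes of line~\ref{pwrite1} (value $0$ by $p_0$, value $1$ by $p_1$) appear in the linearization $f(H)$ of the full execution $H$. The argument has two halves. First, using only plain linearizability, I claim that if any $p_i$ survives then the value written \emph{first} in $\sigma_j$ equals $b_j$. Indeed, to pass the guard of line~\ref{guard1} process $p_i$ must read $u_1=c_1$ and $u_2=1-c_1$; since $c_1\neq -1$ it has read $C_1[j]$ after $p_0$'s write, so $c_1=b_j$, giving $u_1=b_j$ and $u_2=1-b_j$. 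As the two reads of lines~\ref{u1} and~\ref{u2} are issued by the same process in program order, they are linearized in that order, and since they return the two distinct values $b_j$ and $1-b_j$, the write of $1-b_j$ must be linearized after the first read and hence after the write of $b_j$; thus $\sigma_j$ writes $b_j$ first.

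The second half is where strong linearizability is essential, and is the heart of the proof: I will show that $\sigma_j$ is \emph{fixed before the coin is flipped}, hence is independent of $b_j$. The point is that $p_0$ completes its write of $R_1[j]$ in line~\ref{pwrite1} \emph{before} it flips the coin in line~\ref{pcoin1}, and the value $b_j$ enters the shared memory only through the subsequent write of $C_1[j]$. Let $G_j$ be the prefix of the execution up to (but not including) this coin flip; $G_j$ is identical in the two continuations corresponding to $b_j=0$ and $b_j=1$, because the strong adversary has not yet seen $b_j$. In $G_j$ the write of $0$ by $p_0$ has already \emph{completed}. Using the prefix property~(P) of the strong linearization function $f$ from Definition~\ref{SL}, the relative order of the two writes of line~\ref{pwrite1} is already determined at the moment $p_0$'s write completes: any operation that $f$ places before $p_0$'s (completed) write is already present, in the same order, in $f$ applied to the closure of the prefix ending at that completion, and this prefix lies inside $G_j$. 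Consequently the two continuations cannot disagree on $\sigma_j$: were $\sigma_j$ to order $p_1$ before $p_0$ in one continuation, its linearization point would precede $p_0$'s completion and hence the flip, forcing the same order in the other continuation by~(P). Since the continuations share the common pre-flip prefix $G_j$, $\sigma_j$ takes the same value in both; being determined by $G_j$, it is independent of the fresh bit $b_j$. Therefore $\Pr[\text{first value of }\sigma_j = b_j]=\tfrac12$, and combined with the first half this bounds the survival probability by $\tfrac12$.

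The step I expect to be the main obstacle is precisely the invocation of~(P) in the previous paragraph when $p_1$'s write of line~\ref{pwrite1} is \emph{concurrent} with $p_0$'s and still \emph{pending} at the coin flip: here no common prefix in $\mathrm{close}(\mathcal{H})$ literally contains both writes as completed operations, so one cannot naively read off $\sigma_j$ from $f(G_j)$. The care needed is to argue about the \emph{linearization point} of $p_1$'s write: if $f$ ordered $p_1$ before $p_0$, that point would have to lie within $p_1$'s interval and before $p_0$'s completion time (which is before the flip), so this ordering is committed by the pre-flip prefix. A two-futures (coupling) argument on the $b_j=0$ and $b_j=1$ continuations, both extending $G_j$, then shows the adversary cannot realize the order $0$-before-$1$ in one continuation and $1$-before-$0$ in the other --- which is exactly what surviving under both coin outcomes would require --- and this contrasts sharply with the merely-linearizable case of Theorem~\ref{LinearizableIsWeak}, where the adversary fixes the order \emph{after} seeing $b_j$.
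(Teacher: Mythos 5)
Your proposal is correct and takes essentially the same approach as the paper's proof: your observation that survival of any $p_i$ ($i\ge 2$) past line~\ref{guard1} forces the first-linearized write of line~\ref{pwrite1} to equal the coin value, combined with the property-(P) argument that this order is already fixed once $p_0$'s write completes (hence before the flip and independent of it), is exactly the content of the paper's Cases A/B and the claims therein, and your reduction via $R_2[j]$ is the paper's Lemma~\ref{smallarestuck}. The only cosmetic difference is that the paper isolates as a separate deterministic case the possibility that $p_0$ never completes its write (so no coin is ever flipped in round $j$), which your argument covers implicitly since survival requires $c_1\in\{0,1\}$.
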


\noindent
To prove the above theorem, we first show the following two lemmas.
 
\newcommand{\ex}{\textrm{reach}}
\newcommand{\exs}{\textrm{reaches}}

\begin{lemma}\label{smallarestuck}
For all rounds $j \ge0$,
	if no process $\exs$ line~\ref{pwrite2} in round $j$,
	then neither $p_0$ nor $p_1$ enters round $j+1$.
\end{lemma}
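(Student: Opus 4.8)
The plan is to argue directly from the semantics of the register $R_2[j]$, using only plain linearizability (strong linearizability is not needed for this lemma). The crucial observation is that line~\ref{pwrite2} is the \emph{only} place in the whole algorithm where anything is written into $R_2[j]$, and the value written there is always \textsc{true}, whereas $R_2[j]$ is initialized to \textsc{false}. Hence, under the hypothesis that no process reaches line~\ref{pwrite2} in round $j$, no write operation is ever invoked on $R_2[j]$ throughout the entire run, so the only value the register ever holds is its initial value \textsc{false}.

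First I would make this precise via linearizability: in any linearization of the history, a read of $R_2[j]$ returns the value of the most recent preceding write in the linearization order, or the initial value \textsc{false} if no write precedes it. Since there are no writes on $R_2[j]$ at all, every read of $R_2[j]$ must return \textsc{false}. In particular, whenever $p_0$ or $p_1$ executes line~\ref{v1} in round $j$, it sets its local variable $v_1 \gets \textsc{false}$. Note that the strong adversary's control of the schedule and of the linearization does not help it here: it cannot make a register return a value that was never written.

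Next I would trace the control flow of $p_0$ and $p_1$. For either of these processes to enter round $j+1$, it must first complete the body of round $j$ without exiting; in particular it must execute the guard of line~\ref{guard2} and find the condition ``$v_1 = \textsc{false}$'' to be \emph{not} satisfied, i.e., it must have read $v_1 = \textsc{true}$ in line~\ref{v1}. By the previous step this is impossible. Therefore each of $p_0$ and $p_1$ that reaches line~\ref{guard2} in round $j$ takes the exit branch in line~\ref{exit2} and does not proceed to round $j+1$.

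The only point that needs care is the case where $p_0$ or $p_1$ does not reach line~\ref{v1} (hence line~\ref{guard2}) in round $j$ at all---for instance if it never completes Phase~1 or takes no further steps. But such a process trivially cannot enter round $j+1$, since entering round $j+1$ requires completing round $j$. Thus in every case neither $p_0$ nor $p_1$ enters round $j+1$, which is exactly the lemma's claim. I do not anticipate a substantial obstacle: the whole argument reduces to the facts that an unwritten register reads as its initial value and that $p_0,p_1$ can continue past line~\ref{guard2} only upon reading \textsc{true}.
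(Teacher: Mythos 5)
Your proposal is correct and follows essentially the same reasoning as the paper's proof: since no write to $R_2[j]$ is ever invoked, every read of it returns the initial value \textsc{false}, so $p_0$ and $p_1$ cannot pass the guard of line~\ref{guard2} and must exit (or never reach the guard at all). The paper phrases this as a proof by contradiction, but the content is identical.
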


\begin{proof}
Suppose no process $\exs$ line~\ref{pwrite2} in round $j$.
Then no process writes into $R_2[j]$, and so $R_2[j] = \textsc{false}$ (the initial value of $R_2[j]$) at all times.
Assume, for contradiction,
	that some process $p_i$ with $i\in\{0,1\}$ enters round $j+1$.
So $p_i$ did not exit in line~\ref{exit2} of round $j$.
Thus, when $p_i$ evaluated the exit condition ``$v_1 = \textsc{false}$'' in line~\ref{guard2} of round $j$,
	it found that $v_1= \textsc{true}$.
But $v_1$ is the value that $p_i$ read from $R_2[j]$ in line~\ref{v1} of that round,
	and so $v_1$ can only be $\textsc{false}$ --- a contradiction.
\end{proof}

\begin{lemma}\label{exitchance}
For all rounds $j \ge0$, 
	with probability at least $1/2$,
	no process enters round $j+1$.
\end{lemma}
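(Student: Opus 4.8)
The plan is to fix an arbitrary round $j \geq 0$ and show that with probability at least $1/2$ \emph{no process ever reaches} line~\ref{pwrite2} in round $j$; combined with Lemma~\ref{smallarestuck} this immediately gives the claim, since if no process reaches line~\ref{pwrite2} then $p_0$ and $p_1$ do not enter round $j+1$, while every process $p_i$ with $i \geq 2$ either exits in line~\ref{exit1} or is stalled in round $j$, so it does not enter round $j+1$ either. Thus everything reduces to bounding the probability that all of $p_2, \ldots, p_{n-1}$ are forced out in line~\ref{exit1} of round $j$.

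The heart of the argument is the following claim: the relative order in which the write of $0$ by $p_0$ and the write of $1$ by $p_1$ into $R_1[j]$ are linearized is fixed \emph{before} $p_0$ flips its coin in line~\ref{pcoin1}, and in particular is independent of the coin outcome. To see this I would let $G$ be the prefix of the run that ends immediately before $p_0$'s write to $C_1[j]$ in line~\ref{pcoin1}. Because $p_0$ executes its write to $R_1[j]$ in line~\ref{pwrite1} before the coin flip, the write of $0$ is already complete in $G$; moreover $G$ is identical no matter how the coin turns out, since the coin has not yet been written. By the prefix property (P) of the strong linearization function $f$ (Definition~\ref{SL}), $f(G)$ is a prefix of $f(H)$ for the full run $H$. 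If $p_1$'s write of $1$ already appears in $f(G)$, its order relative to $p_0$'s write of $0$ is frozen by $f(G)$; if it does not, then $p_0$'s write of $0$ lies in the prefix $f(G)$ while $p_1$'s write of $1$ must be appended later, forcing the order to be $0$-then-$1$. Either way the order is determined by the coin-independent prefix $G$.

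With the order fixed, I would analyze the two reads of $R_1[j]$ performed by a process $p_i$ ($i \geq 2$) in lines~\ref{u1} and~\ref{u2}. Since these reads are sequential, $u_1$ is linearized before $u_2$, so as the register value evolves monotonically (from $\bot$ through the first linearized write to the second) the attainable pairs $(u_1, u_2)$ are exactly $(\bot,\bot),(\bot,0),(\bot,1),(0,0),(0,1),(1,1)$ when the order is $0$-then-$1$, and the mirror-image set when it is $1$-then-$0$. The non-exit condition of line~\ref{guard1} requires $u_1 = c_1$ and $u_2 = 1-c_1$, i.e.\ $(u_1,u_2)=(c_1,1-c_1)$; for order $0$-then-$1$ this is attainable only for $c_1 = 0$ (the pair $(1,0)$ is impossible), and symmetrically order $1$-then-$0$ admits only $c_1 = 1$. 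Hence whenever the coin disagrees with the already fixed order, no attainable $(u_1,u_2)$ satisfies the guard, so every $p_i$ is forced to exit in line~\ref{exit1} regardless of when it is scheduled. Because the order is fixed before and independently of the fair coin, this disagreement occurs with probability exactly $1/2$, which yields the bound.

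I expect the delicate point to be making the freezing argument of the second paragraph fully rigorous against a strong adversary: one must argue about histories in $\mathrm{close}(\mathcal{H})$ and confirm that a pending write of $1$ in $G$ cannot be slipped in front of the already-linearized write of $0$ in any extension, and that the adversary's ability to observe the coin and react afterwards cannot retroactively reorder operations that precede the coin flip in $f$. Everything else---the reduction via Lemma~\ref{smallarestuck} and the case analysis on $(u_1,u_2)$---is routine once this coin-independence of the linearization order is established.
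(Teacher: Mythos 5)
Your proposal follows essentially the same route as the paper's proof: use the prefix property (P) of the strong linearization function (Definition~\ref{SL}) to argue that the relative order of the two writes into $R_1[j]$ is already determined by a coin-independent prefix, and then observe that the guard in line~\ref{guard1} can fail (allowing a process to proceed) only when $c_1$ matches that order, which happens with probability at most $1/2$; the reduction via Lemma~\ref{smallarestuck} is also the paper's. Two points need tightening, and both are places where the paper is more careful. First, you cut the prefix $G$ \emph{immediately before $p_0$'s write to $C_1[j]$} and justify its coin-independence by saying ``the coin has not yet been written''; but a strong adversary observes the outcome of a coin flip the moment it occurs, before the result reaches $C_1[j]$, so the schedule between the flip and that write---and hence your $G$---could already depend on the coin. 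The paper cuts $G$ at the completion of $p_0$'s write of $0$ into $R_1[j]$ in line~\ref{pwrite1}, which is unambiguously before the flip; the same freezing argument (the write of $0$ is in $f(G)$, and either the write of $1$ precedes it there or, by (P), it never will in any $f(H)$) then goes through verbatim. Second, your probabilistic statement presumes $p_0$ reaches line~\ref{pcoin1}; the paper handles separately (its Case I) the run in which the adversary never lets $p_0$ complete its write of $R_1[j]$, so no coin is ever flipped and no order need ever be fixed. There the conclusion holds deterministically: $C_1[j]$ stays at $-1$, every process reaching line~\ref{guard1} has $c_1=-1\neq u_1$ and exits in line~\ref{exit1}, and Lemma~\ref{smallarestuck} stops $p_0$ and $p_1$. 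Your $(u_1,u_2)$ case analysis already covers $c_1=-1$, but the case must be stated explicitly, since ``probability exactly $1/2$'' is not the right quantifier for it (the paper says ``at least $1/2$'' throughout for this reason). With these two repairs your argument coincides with the paper's.
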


\begin{proof}
Consider any round $j\ge0$.
There are two cases:

\begin{enumerate}[(I)]
\item\label{noRwrite} \textbf{Process $p_0$ does not complete its write of register $R_1[j]$ in line~\ref{pwrite1} in round $j$.}

Thus, $p_0$ never reaches line~\ref{pcoin1} (where it writes $C_1[j]$) in round $j$.
So $C_1[j] = -1$ (the initial value of $C_1[j]$) at all times.

\begin{claim}\label{allarestuck-det}
No process enters round $j+1$.
\end{claim}
\begin{proof}
We first show that no process $\exs$ line~\ref{pwrite2} in round $j$,
To see why, suppose, for contradiction,
	some process $p_i$
	$\exs$ line~\ref{pwrite2} in round $j$.
So $p_i$ did not exit in line~\ref{exit1} of round $j$.
Thus, when $p_i$ evaluated the exit condition ($u_1 \neq c_1$ or $u_2 \neq 1-c_1$) in line~\ref{guard1}
	it found the condition to be false, i.e., it found that $u_1= c_1$ and $u_2 = 1-c_1$.
Note that $c_1$ is the value that $p_i$ read from $C_1[j]$ in line~\ref{rcoin1}, and so $c_1 = -1$.
Thus, $p_i$ found that $u_1= -1$ and $u_2 = 2$ in line~\ref{guard1}.
But $u_1$ is the value that $p_i$ read from $R_1[j]$ in line~\ref{u1},
	and so $u_1$ can only be $\bot$ (the initial value of $R_1[j]$),
	or 0~or~1 (the values written into it by $p_0$ and $p_1$, respectively).
So $u_1\neq -1$ in line~\ref{guard1} --- a contradiction.

So no process $\exs$ line~\ref{pwrite2} in round $j$.
This implies that: (i) processes $p_2, \dots, p_{n-1}$ do not enter round $j+1$,
	and (ii) by Lemma~\ref{smallarestuck}, neither $p_0$ nor $p_1$ enters round $j+1$.
\end{proof}

\item\label{Rwrite} \textbf{Process $p_0$ completes its write of register $R_1[j]$ in line~\ref{pwrite1} in round $j$.}

\begin{claim}\label{allarestuck-rndm}
With probability at least 1/2, no process enters round $j+1$.
\end{claim}

\begin{proof}
Consider the set of histories $\mathcal{H}$ of Algorithm~\ref{toyalgo}; this is a set of histories over
	the registers $R_1[j]$, $R_2[j]$, $C_1[j]$ for $j \ge 0$.
Since these registers are strongly linearizable,
	by Lemma 4.8 of~\cite{sl11}, $\mathcal{H}$~is~strongly linearizable, i.e., it has
	at least one strong linearization function
	that satisfies properties (L) and (P) of Definition~\ref{SL}.
Let $f$ be the $\str$ linearization function that
	the~adversary~$\mathcal{S}$~uses.

Let $G$ be an arbitrary history of the algorithm up to
	and including the completion of the write of $0$ into $R_1[j]$
	by $p_0$ in line~\ref{pwrite1} in round $j$.
Since $p_0$ completes its write of $0$ into $R_1[j]$ in $G$,  
	this write operation appears in the $\str$ linearization $f(G)$.
Now there are two cases:

\begin{itemize}
\item\label{Casino1} \textbf{Case A}:
In $f(G)$, the write of $1$ into $R_1[j]$ by $p_1$ in line~\ref{pwrite1} in round $j$
	occurs \emph{before}
the write of $0$ into $R_1[j]$ by $p_0$ in line~\ref{pwrite1} in round $j$.

Since $f$ is a $\str$ linearization function,
	for every extension $H$ of the history $G$
	(i.e., for every history $H$ such that $G$ is a prefix of $H$),
	the write of $1$ into $R_1[j]$
	occurs before
	the write of $0$ into $R_1[j]$
	in the linearization $f(H)$.
Thus, in $G$ and every extension $H$ of $G$,
	no process can first read $0$ from $R_1[j]$ and then read $1$ from~$R_1[j]$~($\star$).

Let $\mathcal{P}$ be the set of processes in $\{p_2 , p_3, \ldots , p_{n-1} \}$
	that reach line~\ref{guard1} in round $j$
	and evaluate the condition
	($u_1 \neq c_1$ or $u_2 \neq 1-c_1$)
	of that line.
Note that $u_1$ and $u_2$ are the values that the processes in $\mathcal{P}$ read from $R_1[j]$
	consecutively in lines~\ref{u1} and \ref{u2}.
So $u_1$ and $u_2$ are in $\{0,1,\bot\}$,
	and, 	by~($\star$), no process can have both $u_1=0$ and $u_2=1$ ($\star \star$).
Moreover, $c_1$ is the value that the processes in $\mathcal{P}$ read from $C_1[j]$ in line~\ref{rcoin1},
	and so $c_1$ is in $\{0,1,-1\}$.

Let $\mathcal{P' \subseteq P}$ be the subset of processes in $\mathcal{P}$
	that have $c_1 = -1$ or $c_1 = 0$ when 
	they evaluate the condition
	($u_1 \neq c_1$ or $u_2 \neq 1-c_1$)
	in line~\ref{guard1} in round $j$.
\begin{cclaim}\label{Mannaggia1}
~
\begin{enumerate}[(a)]
\item\label{SC1} No process in $\mathcal{P'}$ $\exs$ line~\ref{pwrite2} in round $j$.
\item\label{SC2} If $\mathcal{P'} = \mathcal{P}$ then neither $p_0$ nor $p_1$ enters round $j+1$.
\end{enumerate}	
\end{cclaim}

\begin{proof}
To see why (a) holds, note that:
	(i) every process $p_i$ in $\mathcal{P'}$ that has $c_1 = -1$
	evaluates the condition ($u_1 \neq c_1$ or $u_2 \neq 1-c_1$) in line~\ref{guard1} to true because $u_1 \neq -1$;
	and
	(ii) every process $p_i$ in $\mathcal{P'}$ that has $c_1 = 0$, 
	also evaluates the condition ($u_1 \neq c_1$ or $u_2 \neq 1-c_1$) in line~\ref{guard1} to true
	(otherwise $p_i$ would have both $u_1 = c_1 = 0$ and $u_2 = 1-c_1 = 1$, which is not possible by ($\star \star$)).
Thus, no process $p_i$ in $\mathcal{P'}$ $\exs$ line~\ref{pwrite2} in round $j$
	(it would exit in line~\ref{exit1} before reaching that line).

To see why (b) holds, suppose $\mathcal{P'} = \mathcal{P}$
	and consider an arbitrary process $p$.
If $p \not \in \mathcal{P}$
	then $p$ does not evaluate the condition
	($u_1 \neq c_1$ or $u_2 \neq 1-c_1$)
	in line~\ref{guard1} in round $j$;
	and if $p \in \mathcal{P}$, then $p \in \mathcal{P'}$, and so from part (a),
	$p$ does not $\ex$ line~\ref{pwrite2} in round $j$.
So in both cases,
	$p$ does not $\ex$ line~\ref{pwrite2} in round $j$.
Thus no process $\exs$ line~\ref{pwrite2} in round $j$,
	and so, by Lemma~\ref{smallarestuck}, neither $p_0$ nor $p_1$ enters round $j+1$.
\end{proof}

Now recall that $G$ is the history of the algorithm up to
	and including the completion of the write of $0$ into $R_1[j]$
	by $p_0$ in line~\ref{pwrite1} in round $j$.
After this write, i.e., in any extension $H$ of $G$,
	$p_0$ is supposed to flip a coin and write the result into $C_1[j]$ in line~\ref{pcoin1}.
Thus, with probability \emph{at least}~$1/2$,
	$p_0$~will \emph{not} invoke
	the operation to write $1$ into $C_1[j]$.
So with probability at least~$1/2$,
	processes never read 1 from $C_1[j]$.
Thus with probability at least~$1/2$, no process in $\mathcal{P}$ has $c_1 = 1$
	when it evaluates the condition
	($u_1 \neq c_1$ or $u_2 \neq 1-c_1$)
	in line~\ref{guard1} in round~$j$.
Since $c_1 \in \{0,1,-1\}$, this implies that
	with probability at least~$1/2$,
	every process in $\mathcal{P}$ has $c_1 = -1$ or $c_1 = 0$
	when it evaluates this condition
	in line~\ref{guard1} in round $j$;
	in other words,
        with probability at least~$1/2$,
	$\mathcal{P' = P}$.
Therefore, from Claim~\ref{Mannaggia1}, with probability at least~$1/2$:
\begin{enumerate}[(a)]
\item\label{SC1} No process in $\mathcal{P}$ $\exs$ line~\ref{pwrite2} in round $j$.
\item\label{SC2} Neither $p_0$ nor $p_1$ enters round $j+1$.
\end{enumerate}
This implies that in Case~A, with probability (at least)~$1/2$,
	no process enters round $j+1$.

\item\label{Casino2} \textbf{Case B}:
In $f(G)$, the write of $1$ into $R_1[j]$ by $p_1$ in line~\ref{pwrite1} in round $j$
	does \emph{not} occur before
the write of $0$ into $R_1[j]$ by $p_0$ in line~\ref{pwrite1} in round $j$.
This case is essentially symmetric to the one for Case~A, we include it below for completeness.

Since $f$ is a $\str$ linearization function,
	for every extension $H$ of the history $G$,
	the write of~$1$ into $R_1[j]$
	does not occur before
	the write of $0$ into $R_1[j]$
	in the linearization $f(H)$.
Thus, in $G$ and every extension $H$ of $G$,
	no process can first read $1$ from $R_1[j]$ and then read $0$ from~$R_1[j]$~($\dagger$).

Let $\mathcal{P}$ be the set of processes in $\{p_2 , p_3, \ldots , p_{n-1} \}$
	that reach line~\ref{guard1} in round $j$
	and evaluate the condition
	($u_1 \neq c_1$ or $u_2 \neq 1-c_1$)
	of that line.
Note that $u_1$ and $u_2$ are the values that the processes in $\mathcal{P}$ read from $R_1[j]$
	consecutively in lines~\ref{u1} and \ref{u2}.
So $u_1$ and $u_2$ are in $\{0,1,\bot\}$,
	and, 	by ($\dagger$), no process can have both $u_1=1$ and $u_2=0$ $(\dagger \dagger)$.
Moreover, $c_1$ is the value that the processes in $\mathcal{P}$ read from $C_1[j]$ in line~\ref{rcoin1},
	and so $c_1$ is in $\{0,1,-1\}$.

Let $\mathcal{P' \subseteq P}$ be the subset of processes in $\mathcal{P}$
	that have $c_1 = -1$ or $c_1 = 1$ when 
	they evaluate the condition
	($u_1 \neq c_1$ or $u_2 \neq 1-c_1$)
	in line~\ref{guard1} in round $j$.

\begin{cclaim}\label{Mannaggia2}
~
\begin{enumerate}[(a)]
\item\label{SCC1} No process in $\mathcal{P'}$ $\exs$ line~\ref{pwrite2} in round $j$.
\item\label{SCC2} If $\mathcal{P'} = \mathcal{P}$ then neither $p_0$ nor $p_1$ enters round $j+1$.
\end{enumerate}	
\end{cclaim}

\begin{proof}
To see why (a) holds, note that:
	(i) every process $p_i$ in $\mathcal{P'}$ that has $c_1 = -1$
	evaluates the condition ($u_1 \neq c_1$ or $u_2 \neq 1-c_1$) in line~\ref{guard1} to true because $u_1 \neq -1$;
	and
	(ii) every process $p_i$ in $\mathcal{P'}$ that has $c_1 = 1$, 
	also evaluates the condition ($u_1 \neq c_1$ or $u_2 \neq 1-c_1$) in line~\ref{guard1} to true
	(otherwise $p_i$ would have both $u_1 = c_1 = 1$ and $u_2 = 1-c_1 = 0$, which is not possible by ($\dagger \dagger$)).
Thus, no process $p_i$ in $\mathcal{P'}$ $\exs$ line~\ref{pwrite2} in round $j$
	(it would exit in line~\ref{exit1} before reaching that line).
	
To see why (b) holds, suppose $\mathcal{P'} = \mathcal{P}$
	and consider an arbitrary process $p$.
If $p \not \in \mathcal{P}$
	then $p$ does not evaluate the condition
	($u_1 \neq c_1$ or $u_2 \neq 1-c_1$)
	in line~\ref{guard1} in round $j$;
	and if $p \in \mathcal{P}$, then $p \in \mathcal{P'}$, and so from part (a),
	$p$ does not $\ex$ line~\ref{pwrite2} in round $j$.
So in both cases,
	$p$ does not $\ex$ line~\ref{pwrite2} in round $j$.
Thus no process $\exs$ line~\ref{pwrite2} in round $j$,
	and so, by Lemma~\ref{smallarestuck}, neither $p_0$ nor $p_1$ enters round $j+1$.
\end{proof}

Now recall that $G$ is the history of the algorithm up to
	and including the completion of the write of $0$ into $R_1[j]$
	by $p_0$ in line~\ref{pwrite1} in round $j$.
After this write, i.e., in any extension $H$ of $G$,
	$p_0$ is supposed to flip a coin and write the result into $C_1[j]$ in line~\ref{pcoin1}.
Thus, with probability \emph{at least}~$1/2$,
	$p_0$~will \emph{not} invoke
	the operation to write $0$ into $C_1[j]$.
So with probability at least~$1/2$,
	processes never read 0 from $C_1[j]$.
Thus with probability at least~$1/2$, no process in $\mathcal{P}$ has $c_1 = 0$
	when it evaluates the condition
	($u_1 \neq c_1$ or $u_2 \neq 1-c_1$)
	in line~\ref{guard1} in round~$j$.
Since $c_1 \in \{0,1,-1\}$, this implies that
	with probability at least~$1/2$,
	every process in $\mathcal{P}$ has $c_1 = -1$ or $c_1 = 1$
	when it evaluates this condition
	in line~\ref{guard1} in round $j$;
	in other words,
        with probability at least~$1/2$,
	$\mathcal{P' = P}$.
Therefore, from Claim~\ref{Mannaggia2}, with probability at least~$1/2$:

\begin{enumerate}[(a)]
\item\label{SC1} No process in $\mathcal{P}$ $\exs$ line~\ref{pwrite2} in round $j$.
\item\label{SC2} Neither $p_0$ nor $p_1$ enters round $j+1$.
\end{enumerate}
This implies that in Case~B, with probability at least~$1/2$,
	no process enters round $j+1$.
\end{itemize}
So in both Cases A and B, with probability at least 1/2, no process enters round $j+1$.
\end{proof}
\end{enumerate}

\noindent
Therefore, from Claims~\ref{allarestuck-det} and~\ref{allarestuck-rndm} of Cases~\ref{noRwrite} and~\ref{Rwrite},
	with probability at least $1/2$,
	no process enters round $j+1$.
\end{proof}

\noindent
We can now complete the proof of Theorem~\ref{WSLinearizableIsStrong},
	namely, that with $\sly$ linearizable registers,
	Algorithm~\ref{toyalgo} terminates with probability 1 in expected $2$ rounds, even against a strong adversary.
%
\noindent
Consider any round $j \ge 0$.
By Lemma~\ref{exitchance},
	with probability at least $1/2$,
	no process enters round $j+1$.
Since this holds for every round $j \ge 0$,
	then it must be that, with probability 1,
	all the processes that take an infinite number of steps
	must exit their loop in lines~\ref{exit1} or \ref{exit2},
	and reach the return statement that follows this loop;
	furthermore, they do so within~2 expected iterations of the loop.

\begin{theorem}\label{bigbob}
Let $\mathcal{A}$ be any randomized algorithm that solves a task $T$ (such as consensus)
	for \linebreak \mbox{$n \ge 3$} processes and terminates with probability 1 against a strong adversary.
There is a corresponding randomized algorithm $\mathcal{A}'$
	that solves $T$ for $n \ge 3$ processes such that:

\begin{enumerate}

\item $\mathcal{A}'$ uses a set $\cal{R}$ of shared registers in addition to the set of base objects of $\mathcal{A}$.

\item If the registers in $\cal{R}$ are atomic or $\sly$ linearizable,
    then $\mathcal{A}'$ terminates with probability 1 against a strong adversary.  
    Furthermore, the expected running time of $\mathcal{A}'$ is only a constant
    more than the expected running time of $\mathcal{A}$.

\item If the registers in $\cal{R}$ are \emph{only} linearizable,
    then a strong adversary can prevent the termination~of~$\mathcal{A}'$.
    
\end{enumerate}

\end{theorem}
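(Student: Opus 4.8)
The plan is to obtain $\mathcal{A}'$ by \emph{sequentially composing} the Weakener of Algorithm~\ref{toyalgo} with $\mathcal{A}$. Concretely, let $\mathcal{R} = \{\,R_1[j], R_2[j], C_1[j] : j \ge 0\,\}$ be the registers of Algorithm~\ref{toyalgo}, chosen disjoint from the base objects of $\mathcal{A}$. In $\mathcal{A}'$ each process $p_i$ first runs its Weakener code on the registers of $\mathcal{R}$; when (and if) it returns from that code in line~\ref{halt0} or~\ref{halt1}, it then runs its code of $\mathcal{A}$ on $\mathcal{A}$'s (untouched, freshly initialized) base objects, starting from its input for $T$ and outputting whatever $\mathcal{A}$ outputs. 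The coin used by $p_0$ in line~\ref{pcoin1} is drawn from a source independent of the randomness used by $\mathcal{A}$. This establishes property~(1) immediately, since the only objects $\mathcal{A}'$ uses beyond those of $\mathcal{A}$ are exactly the registers in $\mathcal{R}$.

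First I would argue that $\mathcal{A}'$ solves $T$ in the safety sense, \emph{regardless} of the linearizability of $\mathcal{R}$: the Weakener prefix touches only the registers in $\mathcal{R}$ and produces no output for $T$, and since $\mathcal{A}$'s base objects are disjoint and start in their initial state, every output $\mathcal{A}'$ ever produces coincides with an output of a legitimate execution of $\mathcal{A}$, and is therefore valid for $T$ (e.g., agreement and validity for consensus). Property~(3) then follows directly from Theorem~\ref{LinearizableIsWeak}: if the registers of $\mathcal{R}$ are only linearizable, a strong adversary $\mathcal{S}$ can schedule the Weakener phase so that all processes execute infinitely many rounds and never reach line~\ref{halt0} or~\ref{halt1}; hence no process ever begins its code of $\mathcal{A}$, so $\mathcal{A}'$ never terminates (and, vacuously, still satisfies the safety of $T$).

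For property~(2) I would invoke Theorem~\ref{WSLinearizableIsStrong}: if $\mathcal{R}$ is atomic or $\sly$ linearizable, then against \emph{every} strong adversary all correct processes exit the Weakener within $2$ expected rounds with probability~$1$. From the moment a process exits, it runs $\mathcal{A}$ on atomic base objects that are in their initial configuration, and by hypothesis $\mathcal{A}$ terminates with probability~$1$ against every strong adversary; composing the two phases shows $\mathcal{A}'$ terminates with probability~$1$. For the running time, I would bound the expected number of steps of $\mathcal{A}'$ by the sum of the expected steps spent in the Weakener and in $\mathcal{A}$: the former is at most a constant by Theorem~\ref{WSLinearizableIsStrong} (a constant number of steps per round over at most $2$ expected rounds), and the latter is at most the expected running time of $\mathcal{A}$, so the overhead is only a constant.

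The hard part will be making this composition argument precise, since both phases face the \emph{same} strong adversary and we must ensure that scheduling decisions made during the Weakener phase cannot sabotage the guarantees of $\mathcal{A}$. The key observations I would spell out are that (i)~the two phases operate on disjoint objects, so the strong adversary's freedom to choose a linearization during the Weakener phase is exercised entirely on $\mathcal{R}$ and leaves $\mathcal{A}$'s (atomic) execution unconstrained; (ii)~the guarantees of Theorems~\ref{LinearizableIsWeak} and~\ref{WSLinearizableIsStrong} and of $\mathcal{A}$ each hold against \emph{all} strong adversaries and \emph{all} schedules, including schedules in which processes begin $\mathcal{A}$ at staggered real times because they left the Weakener at different points; and (iii)~since the Weakener's coin is independent of $\mathcal{A}$'s randomness, the residual adversary acting during the $\mathcal{A}$ phase is, for each fixed outcome of the Weakener, a valid strong adversary for $\mathcal{A}$ alone. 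Conditioning on the Weakener outcome and applying linearity of expectation then yields both termination with probability~$1$ and the additive expected-time bound.
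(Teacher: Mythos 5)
Your proposal is correct and follows essentially the same route as the paper: prefix $\mathcal{A}$ with the Weakener of Algorithm~\ref{toyalgo} on a fresh set of registers $\mathcal{R}$, then invoke Theorem~\ref{WSLinearizableIsStrong} for property~(2) and Theorem~\ref{LinearizableIsWeak} for property~(3). The additional care you take with the composition argument (disjoint objects, independent coins, conditioning on the Weakener outcome) is sound and in fact more detailed than the paper's own treatment.
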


\begin{proof}
Consider any randomized algorithm $\mathcal{A}$ that solves some task $T$
	for $n \ge 3$ processes $p_0, p_1, p_2, \ldots, p_{n-1}$,
	and terminates with probability 1 against a strong adversary.
Using $\mathcal{A}$, we construct the following randomized algorithm $\mathcal{A}'$:
	every process $p_i$ with $i\in\{0,1,2,...,n-1\}$
	first executes Algorithm~\ref{toyalgo};
	if $p_i$ returns then it executes algorithm $\mathcal{A}$.
Note that:

\begin{enumerate}

\item In addition to the set of base objects that $\mathcal{A}$ uses,
	the algorithm $\mathcal{A}'$ uses the set of shared registers ${\cal{R}} = \{R_1[j], R_2[j], C_1[j] ~|~ \textrm{for } j \ge 0 \}$.
  
\item Suppose these registers are $\sly$ linearizable.
	Then, by Theorem~\ref{WSLinearizableIsStrong},
	Algorithm~\ref{toyalgo} (that processes execute before executing $\mathcal{A}$)
	terminates with probability 1 in expected $2$ rounds against a strong adversary.
	Since $\mathcal{A}$ also terminates with probability 1 against a strong adversary,
	the algorithm $\mathcal{A}'$ also terminates with probability 1 against a strong adversary,
	and the expected running time of $\mathcal{A}'$ is only a constant time
        more than the expected running time of the given algorithm $\mathcal{A}$.

	Since $\mathcal{A}$ solves task $T$, it is clear that $\mathcal{A}'$ also solves $T$.
	
\item Suppose these registers are linearizable but not $\sly$ linearizable.
	Then, by Theorem~\ref{LinearizableIsWeak},
	a~strong adversary can construct a run of Algorithm~\ref{toyalgo} where, with probability~$1$,
	all the processes execute
	infinitely many rounds and never return.
	Thus, since $\mathcal{A}'$ starts by executing Algorithm~\ref{toyalgo},
	it is clear that a strong adversary can prevent the termination of $\mathcal{A}'$ \emph{with probability 1}.
	\qedhere
\end{enumerate}
\end{proof}

\bibliographystyle{abbrv}
\bibliography{simpler}

\end{document}